\theoremstyle{plain}
\newcounter{C}
\newtheorem{theorem}{Theorem}[section]
\theoremstyle{definition}
\newtheorem{definition}{Definition}[section]
\newtheorem{example}{Example}[section]
\begin{document}

\title{DM-RSA: An Extension of RSA with Dual Modulus}
\author{ANDRIAMIFIDISOA Ramamonjy \\\texttt{ramamonjy.andriamifidisoa@univ-antananarivo.mg}\\LALASOA Rufine Marius\\ \texttt{larissamarius.lm@gmail.com}}
\date{July 13, 2025}
\maketitle

\begin{abstract}
We introduce DM-RSA (Dual Modulus RSA), a variant of the RSA cryptosystem that employs two distinct moduli symmetrically to enhance security. By leveraging the Chinese Remainder Theorem (CRT) for decryption, DM-RSA provides increased robustness against side-channel attacks while preserving the efficiency of classical RSA. This approach improves resistance to partial compromise of a modulus and integrates easily into existing infrastructures.
\end{abstract}

\section{Introduction}

The RSA cryptosystem, based on the difficulty of factoring large numbers, is widely used to secure communications. However, side-channel attacks (e.g., \cite{Kocher1996}) reveal vulnerabilities in classical implementations. We propose DM-RSA, an extension of RSA that uses two independent moduli from the encryption phase, combined via the Chinese Remainder Theorem (CRT) for decryption. This approach enhances security while maintaining comparable efficiency.

The cryptosystem DM-RSA differs from existing RSA variants, such as Multi-Prime RSA \cite{Takagi1998} and Batch RSA \cite{Fiat1989}, by its symmetric use of two moduli from the encryption phase. Unlike multi-key schemes \cite{Boneh2002}, DM-RSA does not require interactive protocols, and its CRT reconstruction preserves efficiency. Compared to \cite{Vigilant2008}, which explores multiple moduli for resilience, DM-RSA introduces symmetric duality:
\begin{itemize}
    \item \textbf{Dual Encryption}: Two independent ciphertexts are generated from the start.
    \item \textbf{Cryptographic Independence}: \( N_1 \) and \( N_2 \) are factored separately, enhancing security.
    \item \textbf{Efficiency}: The additional cost (two exponentiations) is offset by CRT.
\end{itemize}

\section{The DM-RSA Cryptosystem}
DM-RSA extends RSA by using two moduli \( N_1 \) and \( N_2 \), generating two independent ciphertexts for each message, increasing redundancy and resistance to attacks.

\begin{definition}[DM-RSA Encryption]
\stepcounter{C}
Let \( p_1, q_1, p_2, q_2 \) be four distinct prime numbers. We define:
\[
N_1 = p_1 q_1, \quad N_2 = p_2 q_2, \quad \varphi(N_i) = (p_i - 1)(q_i - 1) \quad (i = 1, 2).
\]
The public key is \( (N_1, N_2, k) \), where \( k \) is coprime with \( \varphi(N_1) \) and \( \varphi(N_2) \). The private key is \( (N_1, N_2, d_1, d_2) \), where \( d_i \equiv k^{-1} \pmod{\varphi(N_i)} \). The encryption and decryption functions are:
\[
E_k(z) = (z^k \pmod{N_1}, z^k \pmod{N_2}), \quad D_{d_1, d_2}(w_1, w_2) = \text{CRT}(w_1^{d_1} \pmod{N_1}, w_2^{d_2} \pmod{N_2}),
\]
where CRT is the Chinese Remainder Theorem, ensuring the uniqueness of the decrypted message if \( \gcd(N_1, N_2) = 1 \).
\end{definition}

\begin{example}
\stepcounter{C}
\textbf{Given:}
\begin{itemize}
    \item \( p_1 = 53 \), \( q_1 = 97 \) \(\Rightarrow\) \( N_1 = 5141 \), \( \varphi(N_1) = 4992 \)
    \item \( p_2 = 61 \), \( q_2 = 89 \) \(\Rightarrow\) \( N_2 = 5429 \), \( \varphi(N_2) = 5280 \)
\end{itemize}

\textbf{Parameters:}
\begin{itemize}
    \item Valid public exponent: \( k = 7 \) (since $\gcd(7, 4992) = \gcd(7, 5280) = 1$)
    \item Private exponents:
    \begin{itemize}
        \item \( d_1 \equiv 7^{-1} \pmod{4992} = 4279 \)
        \item \( d_2 \equiv 7^{-1} \pmod{5280} = 2263 \)
    \end{itemize}
\end{itemize}

\textbf{Encryption:}
For message \( z = 65 \):
\[
E_7(65) = (65^7 \bmod 5141,\ 65^7 \bmod 5429) = (2979,\ 3757)
\]

\textbf{Decryption:}
For ciphertext \( (2979, 3757) \):
\[
2979^{4279} \bmod 5141 = 65
\]
\[
3757^{2263} \bmod 5429 = 65
\]
By the Chinese Remainder Theorem, the original message is \( z = 65 \).

\section*{Final Answer}
\begin{itemize}
    \item \textbf{Correct Encryption:} \( E_7(65) = (2979, 3757) \)
    \item \textbf{Correct Decryption:} Returns \( z = 65 \)
\end{itemize}
\end{example}

\begin{theorem}[DM-RSA Security]
\stepcounter{C}
Assuming that factoring \( N_1 \) or \( N_2 \) is computationally difficult, DM-RSA is at least as secure as classical RSA, with increased resistance to side-channel attacks due to the redundancy of the moduli.
\end{theorem}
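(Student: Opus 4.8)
The plan is to prove the two assertions separately: the computational comparison (``at least as secure as classical RSA'') by a black-box reduction, and the side-channel resistance by an argument about the independence of the two moduli. For the main claim I would fix one-wayness under the RSA assumption as the security notion and argue by contraposition, constructing from any efficient DM-RSA inverter $\mathcal{A}$ an efficient inverter $\mathcal{B}$ for classical RSA. The natural simulator takes a classical challenge $(N, k, c)$ with $c = z^k \bmod N$, sets $N_1 := N$, honestly samples a second instance $(N_2, d_2)$ with $\gcd(N_1, N_2) = 1$, presents $\mathcal{A}$ with the public key $(N_1, N_2, k)$ and a ciphertext whose first coordinate is $c$, and returns $\mathcal{A}$'s output. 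Since by Definition~2.1 the first coordinate of a DM-RSA ciphertext is exactly a classical RSA ciphertext under $(N_1, k)$, a correct answer from $\mathcal{A}$ yields the preimage $z$ and breaks classical RSA.

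The main obstacle is faithful simulation of the \emph{second} ciphertext coordinate. To invoke $\mathcal{A}$ the simulator must supply a consistent pair $(c,\ z^k \bmod N_2)$, yet it does not know $z$; and because $\gcd(N_1, N_2) = 1$, the residue $z \bmod N_2$ is information-theoretically independent of the challenge $c = z^k \bmod N_1$, so the second coordinate cannot be derived from the challenge even with the factorization of $N_2$ in hand. Closing this gap is the crux: one must either restrict to adversaries that succeed given a single valid coordinate (which only shows that breaking the \emph{easier} of $N_1, N_2$ suffices, i.e.\ DM-RSA is at most as strong as its weaker modulus) or introduce an extra assumption. I would also have to confront a Håstad/broadcast-type concern head-on, since the same $z$ is encrypted under both moduli with a common exponent $k$: applying CRT to $(c_1, c_2)$ recovers $z^k \bmod N_1 N_2$, so whenever $z^k < N_1 N_2$, that is $z < (N_1 N_2)^{1/k}$ (for the example's $k=7$, roughly $z < N^{2/7}$), the plaintext is recovered by a plain integer $k$-th root, with no factoring at all. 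An honest proof must therefore either impose a message-size or padding hypothesis excluding this regime, or restate the guarantee to account for it.

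For the side-channel claim I would argue only heuristically, as it falls outside the one-wayness game. The formalization I would aim for is that partial compromise of one modulus leaves the other instance's one-wayness intact: because $d_1$ and $d_2$ operate on distinct moduli whose factorizations $(p_1, q_1)$ and $(p_2, q_2)$ are independently sampled, a leakage oracle constraining the exponentiation modulo $N_1$ imposes no algebraic relation on the computation modulo $N_2$. This part I expect to be the least delicate; essentially all the difficulty is concentrated in the reduction's simulation gap and in neutralizing the broadcast attack, and I would be prepared to weaken the theorem's statement (to a conditional one-wayness claim under a message-size restriction) should those obstacles prove unremovable.
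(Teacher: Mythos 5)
Your proposal takes a genuinely different --- and far more demanding --- route than the paper. The paper's proof is an informal four-point enumeration: it asserts that an attacker must factor the moduli, asserts that compromise of one modulus is harmless, verifies correctness via \( k d_i \equiv 1 \pmod{\varphi(N_i)} \), and asserts side-channel resistance; there is no security game and no reduction. You instead fix one-wayness as the notion and attempt a black-box reduction from classical RSA to DM-RSA, which is the only way to make ``at least as secure as'' meaningful. The two obstacles you identify are real, and the paper does not overcome them --- it simply never engages with them.

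Both obstacles in fact refute claims the paper's proof makes. On the simulation gap: since \( \gcd(N_1, N_2) = 1 \), the simulator cannot derive the second coordinate \( z^k \bmod N_2 \) from the challenge \( z^k \bmod N_1 \), so the naive reduction fails; and the repaired version (an adversary using only one valid coordinate) shows DM-RSA is at most as strong as its \emph{weaker} modulus. Indeed, for any plaintext \( z < N_1 \) an attacker who factors \( N_1 \) alone computes \( d_1 \) and recovers \( z = w_1^{d_1} \bmod N_1 \) outright, with no CRT and no knowledge of \( N_2 \) --- which directly contradicts item 2 of the paper's proof (``partial compromise \dots does not allow message recovery without factoring \( N_2 \)''). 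On the broadcast concern: CRT applied to the \emph{public} ciphertext pair yields \( z^k \bmod N_1 N_2 \), so any unpadded \( z < (N_1 N_2)^{1/k} \) is recovered by an integer \( k \)-th root with no factoring at all; this is a genuine attack on the scheme as defined. Your conclusion --- that the theorem survives at best as a conditional one-wayness claim under a message-size or padding hypothesis, and even then only at the strength of a single modulus --- is correct. Your proposal does not prove the theorem as stated, but that is because the theorem as stated is not provable; your diagnosis is sound where the paper's argument is not.
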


\begin{proof}
The security of DM-RSA relies on the following points:
\begin{enumerate}
    \item \textbf{Factorization Difficulty}: As in RSA, breaking DM-RSA requires factoring \( N_1 = p_1 q_1 \) and \( N_2 = p_2 q_2 \), a problem presumed difficult for large primes. An attacker must compute \( \varphi(N_i) \) to find \( d_i = k^{-1} \pmod{\varphi(N_i)} \), which requires factoring \( N_i \).
    \item \textbf{Moduli Redundancy}: Encryption produces two ciphertexts \( (w_1, w_2) \). Partial compromise (e.g., factoring \( N_1 \)) does not allow message recovery without factoring \( N_2 \), as decryption requires both components via CRT.
    \item \textbf{Correctness via CRT}: If \( \gcd(N_1, N_2) = 1 \), CRT ensures \( D_{d_1, d_2}(E_k(z)) = z \) for \( z < N_1 N_2 \). For \( w_1 = z^k \pmod{N_1} \), \( w_1^{d_1} = z^{k d_1} \equiv z \pmod{N_1} \) (since \( k d_1 \equiv 1 \pmod{\varphi(N_1)} \)), and similarly for \( N_2 \). CRT recombines these into a unique \( z \).
    \item \textbf{Side-Channel Attack Resistance}: Side-channel attacks (e.g., \cite{Kocher1996}) exploit information leaks on a single modulus. With DM-RSA, a leak on \( N_1 \) (e.g., computation time) reveals no information about \( N_2 \), increasing robustness.
\end{enumerate}
Thus, DM-RSA's security is at least equivalent to RSA's, with additional protection against attacks through redundancy.
\end{proof}

\section{Advantages of DM-RSA}
DM-RSA offers several advantages over classical RSA:
\begin{itemize}
    \item \textbf{Increased Robustness}: Compromising one modulus does not allow message decryption, unlike RSA, which uses a single modulus.
    \item \textbf{Efficiency}: Encryption is similar to RSA (two modular exponentiations), and decryption uses CRT for fast reconstruction.
    \item \textbf{Compatibility}: DM-RSA can be integrated into existing RSA infrastructures with minimal modifications.
    \item \textbf{Flexibility}: The moduli \( N_1 \) and \( N_2 \) can have different sizes, tailored to security needs.
\end{itemize}

\section{Conclusion}
DM-RSA offers an innovative extension of RSA, combining the robustness of two independent moduli with the efficiency of the Chinese Remainder Theorem. Its key strengths are enhanced resistance to side-channel attacks and compatibility with existing infrastructures. Future research could explore optimizing modulus sizes and adapting DM-RSA to post-quantum cryptography.

\bibliographystyle{plain}

\end{document}